\begin{document}
\title{An Uplink-Downlink Duality for \\ Cloud Radio Access Network}

\author{ \IEEEauthorblockN{Liang Liu, Pratik Patil, and Wei Yu}
\IEEEauthorblockA{Department of Electrical and Computer Engineering\\
University of Toronto, Toronto, ON, M5S 3G4, Canada \\ Emails: lianguot.liu@utoronto.ca, $\{$ppatil,weiyu$\}$@comm.utoronto.ca }}

\maketitle

\begin{abstract}
Uplink-downlink duality refers to the fact that the Gaussian broadcast channel has the same capacity region as the dual Gaussian multiple-access channel under the same sum-power constraint. This paper investigates a similar duality relationship between the uplink and downlink of a cloud radio access network (C-RAN), where a central processor (CP) cooperatively serves multiple mobile users through multiple remote radio heads (RRHs) connected to the CP with finite-capacity fronthaul links. The uplink of such a C-RAN model corresponds to a multiple-access relay channel; the downlink corresponds to a broadcast relay channel. This paper considers compression-based relay strategies in both uplink and downlink C-RAN, where the quantization noise levels are functions of the fronthaul link capacities. If the fronthaul capacities are infinite, the conventional uplink-downlink duality applies. The main result of this paper is that even when the fronthaul capacities are finite, duality continues to hold for the case where independent compression is applied across each RRH in the sense that when the transmission and compression designs are jointly optimized, the achievable rate regions of the uplink and downlink remain identical under the same sum-power and individual fronthaul capacity constraints. As an application of the duality result, the power minimization problem in downlink C-RAN can be efficiently solved based on its uplink counterpart.
\end{abstract}

\IEEEpeerreviewmaketitle

\newtheorem{theorem}{Theorem}
\newtheorem{remark}{Remark}
\newcommand{\mv}[1]{\mbox{\boldmath{$ #1 $}}}

\section{Introduction}\label{sec:Introduction}
As a promising candidate for the future 5G standard, cloud radio access network (C-RAN) enables a centralized processing architecture, using multiple relay-like base stations (BSs), named remote radio heads (RRHs), to serve mobile users cooperatively under the coordination of a central processor (CP). The practically achievable throughput of C-RAN is largely constrained by the finite-capacity fronthaul links between the RRHs and the CP. In the literature, a considerable amount of effort has been dedicated to the study of efficient fronthaul techniques \cite{Yu16}. In the uplink, the compression-based strategy can be used, where each RRH samples, quantizes and forwards its received signals to the CP over its fronthaul link such that each user's messages can be jointly decoded \cite{Yuwei13}. Likewise, in the downlink, the CP can pre-form the beamforming vectors, then compress and transmit the beamformed signals via the fronthaul links to RRHs for coherent transmission \cite{Simeone13}. This paper focuses on these compression-based strategies and reveals an uplink-downlink duality for C-RAN.

Information theoretically, the uplink of C-RAN model corresponds to a multiple-access relay channel, while the downlink corresponds to a broadcast relay channel. When the fronthaul links have infinite capacities, the C-RAN model reduces to a multiple-access channel (MAC) in the uplink, and a broadcast channel (BC) in the downlink, for which there is a well-known uplink-downlink duality, i.e., the achievable rate regions with linear beamforming or the entire capacity regions of the MAC and the dual BC are identical under the same sum-power constraint \cite{Tassiulas98}--\cite{Yu07}. In this paper, we extend such an uplink-downlink duality relationship to the C-RAN model where the fronthaul capacities are finite. We show that if independent compression is applied across each RRH, the achievable rate regions of the uplink and downlink C-RAN are identical under the same sum-power and the same individual fronthaul capacity constraints when all the terminals are equipped with a single antenna. Specifically, given any downlink (uplink) power assignment, transmit (receive) beamforming and quantization noise levels, we construct the corresponding respective uplink (downlink) power assignment, receive (transmit) beamforming and quantization noise levels such that each user's access rate, each RRH's fronthaul rate, as well as the sum-power are all preserved. Based on this duality result, we further show that the downlink sum-power minimization problem can be efficiently solved based on its uplink counterpart.

Although this paper focuses on the compression-based strategies for C-RAN, we remark that other coding strategies can potentially outperform the compression strategy for certain channel parameters. For example, the  so-called ``compute-and-forward'' and ``reverse compute-and-forward'' strategies have been applied for the uplink and downlink C-RAN, respectively, where linear functions of transmitted codewords are computed and then sent between the RRHs and CP \cite{Caire13}. Moreover, in the downlink, the CP may opt to share user data directly with the RRHs \cite{Yu14}. However, the capacity of the uplink and downlink C-RAN model is still an open problem.

{\it Notation}: Scalars are denoted by lower-case letters, vectors
denoted by bold-face lower-case letters, and matrices denoted by
bold-face upper-case letters. $\mv{I}$ and $\mv{1}$  denote an
identity matrix and an all-one vector, respectively, with
appropriate dimensions. For a square matrix $\mv{S}$, $\mv{S}^{-1}$
denotes its inverse (if $\mv{S}$ is full-rank). For a matrix
$\mv{M}$ of arbitrary size, $[\mv{M}]_{i,j}$ denotes the entry on the $i$th row and $j$th column of $\mv{M}$. ${\rm
Diag}(x_1,\cdots,x_K)$ denotes a diagonal matrix
with the diagonal elements given by $x_1,\cdots,x_K$.  $\mathbb{C}^{x \times y}$ denotes the space of
$x\times y$ complex matrices.

\section{System Model}\label{sec:System Model}
\begin{figure}
\begin{center}
\scalebox{0.45}{\includegraphics*{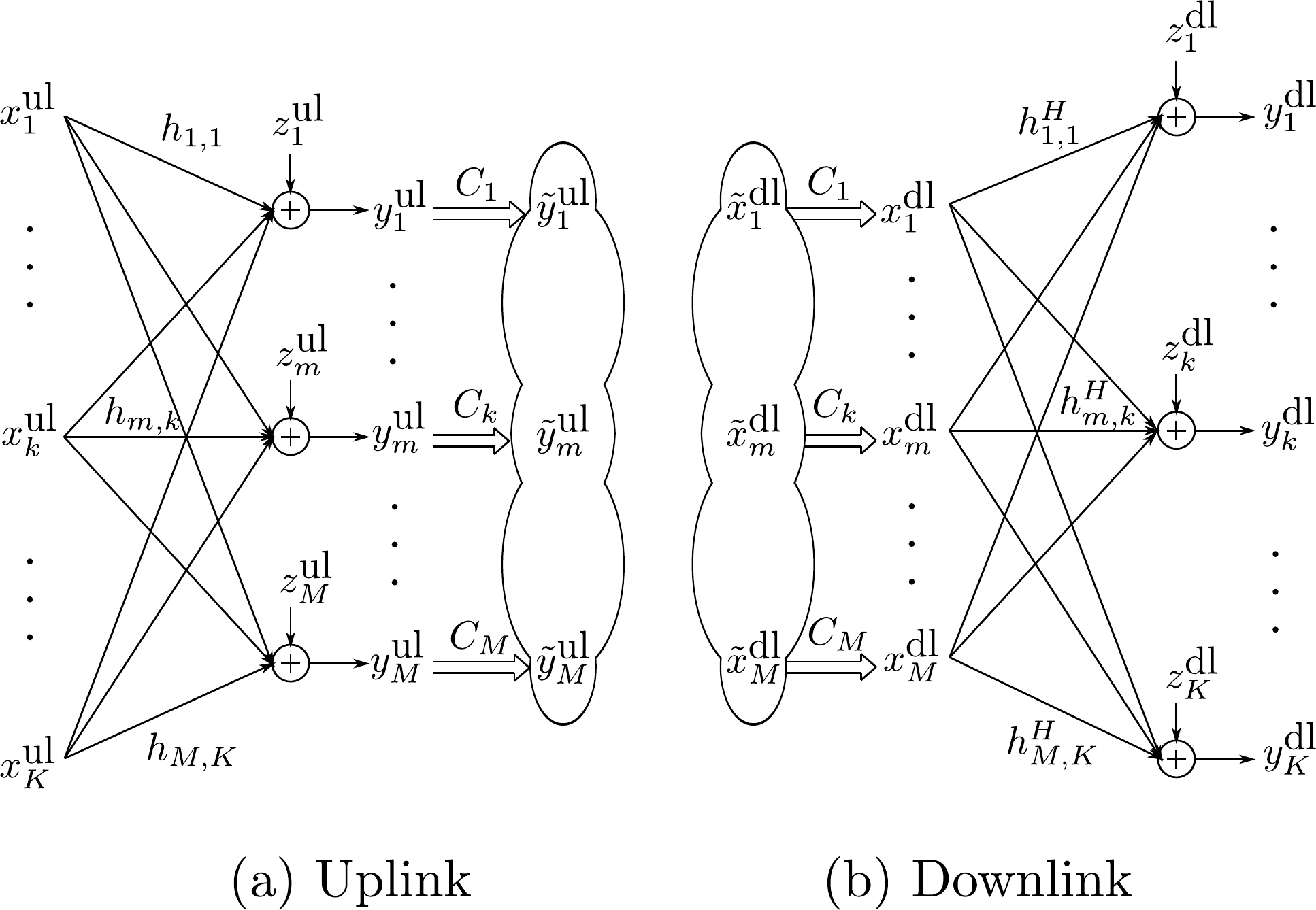}}
\end{center}
\caption{System model of the uplink and downlink C-RAN.}\label{fig2} \vspace{-15pt}
\end{figure}

The uplink and downlink models for a C-RAN consist of one CP, $M$ RRHs, and $K$ users. It is assumed that all the RRHs and users are equipped with a single antenna. In the uplink, the channel from user $k$ to RRH $m$ is denoted by $h_{m,k}$; in the dual downlink, the channel from  RRH $m$ to user $k$ is then given by $h_{m,k}^H$. The sum-power constraint is denoted by $P$ for both the uplink and downlink. It is assumed that RRH $m$ is connected to the CP via a noiseless digital fronthaul link with capacity $C_m$. In this paper, we focus on compression-based strategies to relay the information between the CP and RRHs via fronthaul links.


\subsection{Uplink C-RAN}\label{sec:Uplink C-RAN}
The uplink C-RAN model is as shown in Fig. \ref{fig2}(a). The discrete-time baseband channel between the users and RRHs can be modelled as
\begin{align}\label{eqn:uplink signal}
\hspace{-8pt} \left[\begin{array}{c} y_{1}^{\rm ul} \\ \vdots \\ y_M^{\rm ul} \end{array} \right]\hspace{-2pt} =\hspace{-2pt} \left[\begin{array}{ccc}h_{1,1} & \cdots & h_{1,K} \\  \vdots & \ddots & \vdots \\ h_{M,1} & \cdots & h_{M,K} \end{array}\right]\left[\begin{array}{c}x_1^{\rm ul} \\ \vdots \\ x_K^{\rm ul} \end{array}\right]\hspace{-2pt}+\hspace{-2pt}\left[\begin{array}{c} z_1^{\rm ul} \\ \vdots \\ z_M^{\rm ul} \end{array}\right],\end{align}
where $x_k^{\rm ul}$ denotes the transmit signal of user $k$, and $z_m^{\rm ul}\sim\mathcal{CN}(0,\sigma^2)$ denotes the additive white Gaussian noise (AWGN) at RRH $m$. In this paper, we assume that each of the users transmits using a Gaussian codebook, i.e., $x_k^{\rm ul}=\sqrt{p_k^{\rm ul}}s_k^{\rm ul}$, $\forall k$, where $s_k^{\rm ul}\sim \mathcal{CN}(0,1)$ denotes the message of user $k$, and $p_k^{\rm ul}$ denotes the transmit power of user $k$.

After receiving the wireless signals from users, RRH $m$ compresses $y_m^{\rm ul}$ and sends the compressed signals to the CP, $\forall m$. The quantization noise can be modeled as an independent Gaussian random variable, i.e,\begin{align}\label{eqn:uplink compression}
\tilde{y}_m^{\rm ul}=y_m^{\rm ul}+e_m^{\rm ul}=\sum\limits_{k=1}^Kh_{m,k}x_k^{\rm ul}+z_m^{\rm ul}+e_m^{\rm ul}, ~ \forall m,
\end{align}where $e_m^{\rm ul}\sim \mathcal{CN}(0,q_m^{\rm ul})$, and $q_m^{\rm ul}$ denotes the variance of the compression noise at RRH $m$. After receiving the compressed signals, the CP first decodes the compression codewords and then applies linear beamforming to decode each user's message, i.e.,
\begin{align}
\tilde{s}_k^{\rm ul}=\mv{w}_k^H\tilde{\mv{y}}^{\rm ul}, ~~~ \forall k,
\end{align}where $\mv{w}_k=[w_{k,1},\cdots,w_{k,M}]^T$ with $\|\mv{w}_k\|^2=1$ denotes the decoding beamforming vector for user $k$'s message, and $\tilde{\mv{y}}^{\rm ul}=[\tilde{y}_1^{\rm ul},\cdots,\tilde{y}_M^{\rm ul}]^T$ denotes the collective compressed signals from all RRHs.

The total transmit power of all the users is expressed as
\begin{align}\label{eqn:uplink sum-power}
P^{\rm ul}(\{p_i^{\rm ul}\})=\sum\limits_{i=1}^K\mathbb{E}[|x_i^{\rm ul}|^2]=\sum\limits_{i=1}^Kp_i^{\rm ul}.
\end{align}Moreover, in this paper we assume that the compression process is done independently across RRHs. Based on rate-distortion theory, the fronthaul rate for transmitting $\tilde{y}_m^{\rm ul}$ is expressed as
\begin{align}\label{eqn:uplink fronthaul rate}
\hspace{-5pt} C_m^{\rm ul}(\{p_i^{\rm ul}\},q_m^{\rm ul})=& ~ I(y_m^{\rm ul};\tilde{y}_m^{\rm ul}) \nonumber \\ = & ~ \log_2\frac{\sum\limits_{i=1}^Kp_i^{\rm ul}|h_{m,i}|^2+q_m^{\rm ul}+\sigma^2}{q_m^{\rm ul}}, ~ \forall m.
\end{align}Finally, with interference treated as noise, the achievable rate of user $k$ is expressed as
\begin{multline}\label{eqn:uplink rate}
R_k^{\rm ul}(\{p_i^{\rm ul},\mv{w}_i\},\{q_m^{\rm ul}\})=I(s_k^{\rm ul};\tilde{s}_k^{\rm ul}) \\ = \log_2\frac{\sum\limits_{i=1}^Kp_i^{\rm ul}|\mv{w}_k^H\mv{h}_i|^2+\sum\limits_{m=1}^Mq_m^{\rm ul}|w_{k,m}|^2+\sigma^2}{\sum\limits_{j\neq k}p_j^{\rm ul}|\mv{w}_k^H\mv{h}_j|^2+\sum\limits_{m=1}^Mq_m^{\rm ul}|w_{k,m}|^2+\sigma^2}, ~~~ \forall k,
\end{multline}where $\mv{h}_k=[h_{1,k},\cdots,h_{M,k}]^T$ denotes the collective channel from user $k$ to all RRHs.

Given the individual fronthaul capacity constraints $C_m$'s and sum-power constraint $P$,  define the set of feasible transmit power, compression noise levels, and receive beamforming vectors as
\begin{align}
& \mathcal{T}^{\rm ul}(\{C_m\},P)=\big\{(\{p_i^{\rm ul},\mv{w}_i\},\{q_m^{\rm ul}\}): P^{\rm ul}(\{p_i^{\rm ul}\})\leq P, \nonumber \\ & C_m^{\rm ul}(\{p_i^{\rm ul}\},q_m^{\rm ul})\leq C_m, \forall m,  \|\mv{w}_i\|^2=1, \forall i  \big\}.
\end{align}With independent compression and linear decoding, the achievable rate region in the uplink C-RAN is thus given by
\begin{align}\label{eqn:downlink rate region}
& \mathcal{R}^{\rm ul}(\{C_m\},P)  \triangleq
\bigcup\limits_{(\{p_i^{\rm ul},\mv{w}_i\},\{q_m^{\rm ul}\})\in \mathcal{T}^{\rm ul}(\{C_m\},P)} \nonumber \\ &  \left\{(r_1^{\rm ul},\cdots,r_K^{\rm ul}): r_k^{\rm ul} \leq R_k^{\rm ul}(\{p_i^{\rm ul},\mv{w}_i\},\{q_m^{\rm ul}\}),\forall k \right\}.
\end{align}

\subsection{Downlink C-RAN}\label{sec:Downlink C-RAN}
The downlink C-RAN model is as shown in Fig. \ref{fig2}(b). The discrete-time baseband channel model between the RRHs and the users is the dual of the uplink channel given by
\begin{align}\label{eqn:downlink received signal}
\hspace{-8pt} \left[\begin{array}{c} y_{1}^{\rm dl} \\ \vdots \\ y_K^{\rm dl} \end{array} \right]\hspace{-2pt} =\hspace{-2pt} \left[\begin{array}{ccc}h_{1,1}^H & \cdots & h_{M,1}^H \\  \vdots & \ddots & \vdots \\ h_{1,K}^H & \cdots & h_{M,K}^H \end{array}\right]\left[\begin{array}{c}x_1^{\rm dl} \\ \vdots \\ x_M^{\rm dl} \end{array}\right]\hspace{-2pt}+\hspace{-2pt}\left[\begin{array}{c} z_1^{\rm dl} \\ \vdots \\ z_K^{\rm dl} \end{array}\right],
\end{align}where $x_m^{\rm dl}$ denotes the transmit signal of RRH $m$, and $z_k^{\rm dl}\sim \mathcal{CN}(0,\sigma^2)$ denotes the AWGN at receiver $k$.

The transmit signals $x_m^{\rm dl}$'s are compressed versions of the beamformed signals $\tilde{x}_m^{\rm dl}$'s. Similar to (\ref{eqn:uplink compression}), the compression noise is modelled as a Gaussian random variable, i.e.,
\begin{align}\label{eqn:downlink compression}
x_m^{\rm dl}=\tilde{x}_m^{\rm dl}+e_m^{\rm dl}, ~~~ \forall m,
\end{align}where $e_m^{\rm dl}\sim \mathcal{CN}(0,q_m^{\rm dl})$, and $q_m^{\rm dl}$ denotes the variance of the compression noise at RRH $m$.

Similar to the uplink, a Gaussian codebook for each user is used for downlink transmission. Define the beamformed signal intended for user  $k$ to be transmitted across all the RRHs as $\mv{v}_k\sqrt{p_k^{\rm dl}}s_k^{\rm dl}$, $\forall k$, where $s_k^{\rm dl}\sim \mathcal{CN}(0,1)$ denotes the message for user $k$, $p_k^{\rm dl}$ denotes the transmit power, and $\mv{v}_k=[v_{k,1},\cdots,v_{k,M}]^T$ with $\|\mv{v}_k\|^2=1$ denotes the transmit beamforming vector. The aggregate signal intended for all the RRHs is thus given by $\sum_{i=1}^K\mv{v}_i\sqrt{p_i^{\rm dl}}s_i^{\rm dl}$, which is compressed and then sent to the RRHs via fronthaul links. The transmit signal across the RRHs is therefore expressed as
\begin{align}\label{eqn:downlink signal}
\left[\begin{array}{c} x_1^{\rm dl} \\ \vdots \\ x_M^{\rm dl}  \end{array}\right]=\left[\begin{array}{c} \sum_{i=1}^Kv_{i,1}\sqrt{p_i^{\rm dl}}s_i^{\rm dl} \\ \vdots \\ \sum_{i=1}^Kv_{i,M}\sqrt{p_i^{\rm dl}}s_i^{\rm dl}  \end{array}\right]+\left[\begin{array}{c}e_1^{\rm dl}\\ \vdots \\ e_M^{\rm dl}  \end{array}\right].
\end{align}Then, the transmit power of all the RRHs is expressed as
\begin{align}\label{eqn:downlink sum-power}
\hspace{-5pt} P^{\rm dl}(\{p_i^{\rm dl}\},\{q_m^{\rm dl}\})=\sum\limits_{m=1}^M\mathbb{E}[|x_m^{\rm dl}|^2]=\sum\limits_{i=1}^Kp_i^{\rm dl}+\sum\limits_{m=1}^Mq_m^{\rm dl}.
\end{align}Again, the compression is done independently across RRHs. As a result, the fronthaul rate for transmitting $x_m^{\rm dl}$ is expressed as
\begin{align}\label{eqn:downlink fronthaul rate}
\hspace{-5pt} C_m^{\rm dl}(\{p_i^{\rm dl},\mv{v}_i\},q_m^{\rm dl})= & I(\tilde{x}_m^{\rm dl};x_m^{\rm dl}) \nonumber \\ = & \log_2 \frac{\sum\limits_{i=1}^Kp_i^{\rm dl}|v_{i,m}|^2+q_m^{\rm dl} }{q_m^{\rm dl}}, ~~~ \forall m.
\end{align}Finally, with linear encoding, the achievable rate of user $k$ can be expressed as
\begin{multline}\label{eqn:downlink rate}
R_k^{\rm dl}(\{p_i^{\rm dl},\mv{v}_i\},\{q_m^{\rm dl}\})=I(s_k^{\rm dl};y_k^{\rm dl}) \\ = \log_2\frac{\sum\limits_{i=1}^Kp_i^{\rm dl}|\mv{v}_i^H\mv{h}_k|^2+\sum\limits_{m=1}^Mq_m^{\rm dl}|h_{m,k}|^2+\sigma^2}{\sum\limits_{j\neq k}p_j^{\rm dl}|\mv{v}_j^H\mv{h}_k|^2+\sum\limits_{m=1}^Mq_m^{\rm dl}|h_{m,k}|^2+\sigma^2}, ~~~ \forall k.
\end{multline}

Given the individual fronthaul capacity constraints $C_m$'s and sum-power constraint $P$,  define the set of feasible transmit power and beamforming vectors as well as compression noise levels as
\begin{align}
& \mathcal{T}^{\rm dl}(\{C_m\},P)= \big\{(\{p_i^{\rm dl},\mv{v}_i\},\{q_m^{\rm dl}\}): P^{\rm dl}(\{p_i^{\rm dl}\},\{q_m^{\rm dl}\})\leq P, \nonumber \\ & C_m^{\rm dl}(\{p_i^{\rm dl},\mv{v}_i\},q_m^{\rm dl}) \leq C_m, \forall m,    \|\mv{v}_i\|^2=1,\forall i \big\}.
\end{align}With independent compression and linear encoding, the achievable rate region in the downlink C-RAN is thus given by
\begin{align}\label{eqn:downlink rate region}
& \mathcal{R}^{\rm dl}(\{C_m\},P)\triangleq
\bigcup\limits_{(\{p_i^{\rm dl},\mv{v}_i\},\{q_m^{\rm dl}\})\in \mathcal{T}^{\rm dl}(\{C_m\},P)}  \nonumber \\ & \left\{ (r_1^{\rm dl},\cdots,r_K^{\rm dl}): r_k^{\rm dl} \leq R_k^{\rm dl}(\{p_i^{\rm dl},\mv{v}_i\},\{q_m^{\rm dl}\}), \forall k \right\}.
\end{align}

\section{Uplink-Downlink Duality for C-RAN}\label{sec:Uplink-downlink Duality}
In this section, we establish a duality relationship for C-RAN under the compression strategy by comparing the rate regions $\mathcal{R}^{\rm ul}(\{C_m\},P)$ and $\mathcal{R}^{\rm dl}(\{C_m\},P)$. As mentioned before, if the fronthaul links between the RRHs and the CP have infinite capacities, i.e., $C_m \rightarrow \infty$, $\forall m$, then the uplink and downlink C-RAN models reduce to the MAC and BC, respectively, thus the conventional uplink-downlink duality applies, i.e., $\mathcal{R}^{\rm ul}(\{C_m\rightarrow \infty\},P)=\mathcal{R}^{\rm dl}(\{C_m\rightarrow \infty\},P)$. With finite values of $C_m$'s, however, it is not obvious if duality still holds. On one hand, the compression noise contributes to the transmit power in the downlink, but has no contribution to the transmit power in the uplink. On the other hand, background Gaussian noise contributes to the fronthaul rate in the uplink, but it does not affect the fronthaul transmission in the downlink. Interestingly, the following theorem shows that these two effects counteract with each other and the uplink-downlink duality continues to exist in C-RAN even with finite fronthaul capacities.

\begin{theorem}\label{theorem1}
Consider a C-RAN model implementing compression-based strategies in both the uplink and the downlink, where all the users and RRHs are equipped with one single antenna. Then, any rate tuple that is achievable in the uplink is also achievable with the same sum-power and individual fronthaul capacity constraints in the downlink, and vice versa, i.e., $\mathcal{R}^{\rm ul}(\{C_m\},P)=\mathcal{R}^{\rm dl}(\{C_m\},P)$.
\end{theorem}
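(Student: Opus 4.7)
My plan is to establish $\mathcal{R}^{\rm ul}(\{C_m\},P)\subseteq\mathcal{R}^{\rm dl}(\{C_m\},P)$ by an explicit dual construction; the reverse inclusion then follows from the manifest symmetry of what I do. Fix any uplink rate tuple $(R_1,\ldots,R_K)$ achieved by some $(\{p_i^{\rm ul},\mv{w}_i\},\{q_m^{\rm ul}\})\in\mathcal{T}^{\rm ul}(\{C_m\},P)$. Shrinking each $q_m^{\rm ul}$ leaves the sum-power unchanged, tightens $C_m^{\rm ul}$, and only \emph{increases} every user rate (since $q_m^{\rm ul}|w_{k,m}|^2$ appears additively in the denominator of each SINR); so without loss of generality every fronthaul constraint is tight. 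Let $\gamma_k$ denote the resulting uplink SINR. Since the downlink rate region is down-closed, it suffices to construct a downlink scheme attaining SINR $\geq \gamma_k$ for each $k$ under the same constraints. I reuse the same beamforming directions by setting $\mv{v}_i=\mv{w}_i$ on the downlink side, and determine $(\{p_i^{\rm dl}\},\{q_m^{\rm dl}\})$ from the downlink SINR equations at the targets $\gamma_k$ together with the tight downlink fronthaul equations.

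After dividing each SINR equation by its $\gamma_k$, both systems take the block form
\begin{align*}
\Phi\begin{bmatrix}\mv{p}^{\rm ul}\\ \mv{q}^{\rm ul}\end{bmatrix}
=\sigma^2\begin{bmatrix}\mv{1}_K\\ \mv{1}_M\end{bmatrix},
\qquad
\Phi^T\begin{bmatrix}\mv{p}^{\rm dl}\\ \mv{q}^{\rm dl}\end{bmatrix}
=\sigma^2\begin{bmatrix}\mv{1}_K\\ \mv{0}_M\end{bmatrix},
\end{align*}
where the $(K+M)\times(K+M)$ matrix $\Phi$ has four blocks: a $K\times K$ block with diagonal entries $|\mv{w}_k^H\mv{h}_k|^2/\gamma_k$ and off-diagonal entries $-|\mv{w}_k^H\mv{h}_j|^2$; a $K\times M$ block with entries $-|w_{k,m}|^2$; an $M\times K$ block with entries $-|h_{m,k}|^2$; and a diagonal $M\times M$ block whose entries are $2^{C_m}-1$. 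The downlink system matrix is \emph{exactly} $\Phi^T$ because transposing swaps the two off-diagonal blocks and flips each $|\mv{w}_k^H\mv{h}_j|^2$ into $|\mv{w}_j^H\mv{h}_k|^2$, which are the interference coefficients actually seen in the dual direction. The sum-power identity then follows in one line by transposing a scalar:
\begin{align*}
P^{\rm dl}
&=\begin{bmatrix}\mv{1}_K^T & \mv{1}_M^T\end{bmatrix}\sigma^2\Phi^{-T}\begin{bmatrix}\mv{1}_K\\ \mv{0}_M\end{bmatrix}\\
&=\begin{bmatrix}\mv{1}_K^T & \mv{0}_M^T\end{bmatrix}\sigma^2\Phi^{-1}\begin{bmatrix}\mv{1}_K\\ \mv{1}_M\end{bmatrix}=\mv{1}_K^T\mv{p}^{\rm ul}=P^{\rm ul}.
\end{align*}
The extra $\sum_m q_m^{\rm dl}$ contribution to $P^{\rm dl}$ is precisely what the $\sigma^2\mv{1}_M$ block on the uplink right-hand side---which arises from the $\sigma^2$ inside the uplink fronthaul rate---accounts for.

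It remains to verify non-negativity of the dual variables and to dispatch the converse. The matrix $\Phi$ is a Z-matrix (positive diagonal, non-positive off-diagonal), and by hypothesis it already admits the strictly positive solution $(\mv{p}^{\rm ul},\mv{q}^{\rm ul})\gg\mv{0}$ against a strictly positive right-hand side, so the standard characterization of nonsingular M-matrices forces $\Phi^{-1}\geq\mv{0}$ entrywise, hence $\Phi^{-T}\geq\mv{0}$, and the constructed dual $(\mv{p}^{\rm dl},\mv{q}^{\rm dl})$ is componentwise non-negative. With matching SINRs, fronthaul rates, and sum-power, $(R_1,\ldots,R_K)\in\mathcal{R}^{\rm dl}$. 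The converse is handled identically: shrink each $q_m^{\rm dl}$ (which simultaneously decreases $P^{\rm dl}$ and tightens the downlink fronthaul), then apply the same transpose and M-matrix argument to the rescaled downlink block system. The principal conceptual obstacle is spotting the correct \emph{rescaling}---dividing the first $K$ rows by $\gamma_k$---that renders the uplink and downlink system matrices exact transposes rather than only ``similar up to diagonal factors''; once this is in place, the power identity collapses to ``a scalar equals its own transpose'' and non-negativity of the duals is a routine M-matrix computation.
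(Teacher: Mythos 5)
Your construction is correct and lands on the same dual point as the paper ($\mv{v}_k=\mv{w}_k$, matched SINRs and fronthaul rates), but the route is genuinely different in organization. The paper first eliminates the quantization noises via the fronthaul equations, leaving a $K\times K$ system $(\mv{I}-\mv{D}\mv{A})\mv{p}=\sigma^2\mv{D}(\mv{1}+\mv{\tau})$; positivity of the dual powers then comes from $\rho(\mv{D}\mv{A}^T)=\rho(\mv{D}\mv{A})<1$ (Perron--Frobenius), and the sum-power identity has to carry the auxiliary vector $\mv{\tau}$ (recording $\sum_m|w_{k,m}|^2/\eta_m$) through a chain of matrix manipulations. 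You instead keep the quantization noises as explicit unknowns, obtaining a $(K+M)\times(K+M)$ Z-matrix system whose downlink counterpart is \emph{exactly} the transpose, with right-hand sides $\sigma^2[\mv{1}_K;\mv{1}_M]$ and $\sigma^2[\mv{1}_K;\mv{0}_M]$. This buys two things: the power identity collapses to ``a scalar equals its own transpose,'' which makes structurally transparent why the downlink quantization power is paid for by the $\sigma^2$ term inside the uplink fronthaul rate (the paper only remarks on this cancellation informally before the theorem); and positivity follows from the standard nonsingular M-matrix characterization instead of a spectral-radius computation --- equivalent tools, but your block form is arguably cleaner. Your normalization also differs slightly: you tighten each fronthaul constraint to $C_m$ up front (valid, since decreasing $q_m^{\rm ul}$ raises both $C_m^{\rm ul}$ and every user rate while leaving the uplink power untouched), whereas the paper works with the achieved fronthaul rates via $\eta_m=2^{C_m^{\rm ul}}-1$; either choice works.

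One step needs patching: in the converse direction the right-hand side of the downlink block system is $\sigma^2[\mv{1}_K;\mv{0}_M]$, which is \emph{not} strictly positive, so the characterization ``a Z-matrix with $\Phi x>0$ for some $x>0$ is a nonsingular M-matrix'' does not apply verbatim when you start from a feasible downlink point: a Z-matrix can map a positive vector to a merely nonnegative one and still be singular. The fix is one line --- replace $\mv{q}^{\rm dl}$ by $(1+\epsilon)\mv{q}^{\rm dl}$; since the $M\times M$ block is positive diagonal and the $K\times M$ coupling enters the first $K$ rows with a factor $\epsilon$, the image is strictly positive in all $K+M$ coordinates for small $\epsilon>0$, so $\Phi^T$ (hence $\Phi$) is a nonsingular M-matrix after all. (Alternatively, take the Schur complement with respect to the positive diagonal $M\times M$ block, which reduces the converse to a $K\times K$ system with strictly positive right-hand side --- essentially the paper's reduced formulation.) With that line added, your argument is complete.
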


\begin{proof}
First, we show that given any feasible uplink solution $(\{\bar{p}_i^{\rm ul},\bar{\mv{w}}_i\},\{\bar{q}_m^{\rm ul}\})\in \mathcal{T}^{\rm ul}(\{C_m\},P)$, the following downlink problem is always feasible.
\begin{align}\hspace{-8pt} \mathop{\mathrm{find}} & ~ \{p_i^{\rm dl},\mv{v}_i\},\{q_m^{\rm dl}\}  \label{eqn:problem 2} \\
\hspace{-8pt} \mathrm {s.t.}  & ~R_k^{\rm dl}(\{p_i^{\rm dl},\mv{v}_i\},\{q_m^{\rm dl}\})=R_k^{\rm ul}(\{\bar{p}_i^{\rm ul},\mv{\bar{w}}_i\},\{\bar{q}_m^{\rm ul}\}), ~ \forall k, \label{eqn:downlink rate constraint} \\ \hspace{-8pt} & ~ C_m^{\rm dl}(\{p_i^{\rm dl},\mv{v}_i\},q_m^{\rm dl})=C_m^{\rm ul}(\{\bar{p}_i^{\rm ul}\},\bar{q}_m^{\rm ul}), ~ \forall m, \label{eqn:downlink fronthaul constraint} \\ \hspace{-8pt} & ~ P^{\rm dl}(\{p_i^{\rm dl}\},\{q_m^{\rm dl}\})=P^{\rm ul}(\{\bar{p}_i^{\rm ul}\}). \label{eqn:downlink power constraint}
\end{align}

According to (\ref{eqn:uplink fronthaul rate}), we have
\begin{align}\label{eqn:uplink quantization noise}
\bar{q}_m^{\rm ul}=\frac{\sum\limits_{i=1}^K\bar{p}_i^{\rm ul}|h_{m,i}|^2+\sigma^2}{\eta_m}, ~~~ \forall m,
\end{align}where $\eta_m=2^{C_m^{\rm ul}(\{\bar{p}_i^{\rm ul}\},\bar{q}_m^{\rm ul})}-1$. By substituting the quantization noise level from (\ref{eqn:uplink quantization noise}) in (\ref{eqn:uplink rate}), it follows that
\begin{align}\label{eqn:uplink SINR}
& \gamma_k = \nonumber \\ & \frac{\bar{p}_k^{\rm ul}|\bar{\mv{w}}_k^H\mv{h}_k|^2}{\sum\limits_{j\neq k}\bar{p}_j^{\rm ul}|\bar{\mv{w}}_k^H\mv{h}_j|^2+\sum\limits_{m=1}^M\frac{(\sum\limits_{i=1}^K\bar{p}_i^{\rm ul}|h_{m,i}|^2+\sigma^2)|\bar{w}_{k,m}|^2}{\eta_m}+\sigma^2}, ~~~ \forall k,
\end{align}where $\gamma_k=2^{R_k^{\rm ul}(\{\bar{p}_i^{\rm ul},\bar{\mv{w}}_i\},\{\bar{q}_m^{\rm ul}\})}-1$.

Define $\bar{\mv{p}}^{\rm ul}, \bar{\mv{\tau}}^{\rm ul}\in \mathbb{C}^{K\times 1}$ with the $k$th elements denoted as $\bar{p}_k^{\rm ul}$ and $\sum_{m=1}^M|\bar{w}_{k,m}|^2/\eta_m$, respectively, $\bar{\mv{D}}^{\rm ul}={\rm diag}(\gamma_1/|\bar{\mv{w}}_1^H\mv{h}_1|^2,\cdots,\gamma_K/|\bar{\mv{w}}_K^H\mv{h}_K|^2)$, and $\bar{\mv{A}}^{\rm ul}\in \mathbb{C}^{K\times K}$ with
\begin{align}
\left[ \bar{\mv{A}}^{\rm ul} \right]_{i,j}=\left\{\begin{array}{ll} \sum\limits_{m=1}^M\frac{|h_{m,i}|^2|\bar{w}_{i,m}|^2}{\eta_m}, & {\rm if} ~ i=j, \\ |\bar{\mv{w}}_i^H\mv{h}_j|^2+ \sum\limits_{m=1}^M\frac{|h_{m,j}|^2|\bar{w}_{i,m}|^2}{\eta_m}, & {\rm otherwise}. \end{array} \right. \label{eqn:uplink A}
\end{align}Then, we can rewrite (\ref{eqn:uplink SINR}) in the following matrix form:
\begin{align}\label{eqn:SINR}
\left(\mv{I}-\bar{\mv{D}}^{\rm ul}\bar{\mv{A}}^{\rm ul}\right)\bar{\mv{p}}^{\rm ul}=\sigma^2\bar{\mv{D}}^{\rm ul}\mv{1}+\sigma^2\bar{\mv{D}}^{\rm ul}\bar{\mv{\tau}}^{\rm ul}.
\end{align}Since $\bar{\mv{p}}^{\rm ul}>\mv{0}$ is a feasible solution to (\ref{eqn:SINR}) by choice, according to \cite[Theorem 2.1]{Seneta81}, it follows that $\rho(\bar{\mv{D}}^{\rm ul}\bar{\mv{A}}^{\rm ul})<1$, where $\rho(\cdot)$ denotes the spectral radius of the argument matrix.

Next, consider problem (\ref{eqn:problem 2}). It follows from constraint (\ref{eqn:downlink fronthaul constraint}) that
\begin{align}\label{eqn:downlink quantization noise}
q_m^{\rm dl}=\frac{\sum\limits_{i=1}^Kp_i^{\rm dl}|v_{i,m}|^2}{\eta_m}, ~~~ \forall m.
\end{align}By substituting the quantization noise level from (\ref{eqn:downlink quantization noise}) in (\ref{eqn:downlink rate constraint}), we have
\begin{align}\label{eqn:new SINR}
\hspace{-5pt}& \gamma_k= \nonumber \\ & \frac{p_k^{\rm dl}|\mv{v}_k^H\mv{h}_k|^2}{\sum\limits_{j\neq k}p_j^{\rm dl}|\mv{v}_j^H\mv{h}_k|^2+\sum\limits_{m=1}^M\frac{\sum\limits_{i=1}^Kp_i^{\rm dl}|v_{i,m}|^2|h_{m,k}|^2}{\eta_m}+\sigma^2}, ~ \forall k.
\end{align}We can rewrite (\ref{eqn:new SINR}) in the following matrix form:
\begin{align}\label{eqn:downlink SINR}
\left(\mv{I}-\mv{D}^{\rm dl}\mv{A}^{\rm dl}\right)\mv{p}^{\rm dl}=\sigma^2\mv{D}^{\rm dl}\mv{1},
\end{align}where $\mv{p}^{\rm dl}\in \mathbb{C}^{K\times 1}$ with the $k$th element denoted as $p_k^{\rm dl}$, $\mv{D}^{\rm dl}={\rm diag}(\gamma_1/|\mv{v}_1^H\mv{h}_1|^2,\cdots,\gamma_K/|\mv{v}_K^H\mv{h}_K|^2)$, and $\mv{A}^{\rm dl}\in \mathbb{C}^{K\times K}$ with
\begin{align}
\left[ \mv{A}^{\rm dl} \right]_{i,j}=\left\{\begin{array}{ll} \sum\limits_{m=1}^M\frac{|h_{m,i}|^2|v_{i,m}|^2}{\eta_m}, & {\rm if} ~ i=j, \\ |\mv{v}_j^H\mv{h}_i|^2+ \sum\limits_{m=1}^M\frac{|h_{m,i}|^2|v_{j,m}|^2}{\eta_m}, & {\rm otherwise}. \end{array} \right. \label{eqn:downlink A}
\end{align}

Next, we choose the downlink beamforming solution as
\begin{align}\label{eqn:downlink beamforming vector}
\mv{v}_k=\bar{\mv{w}}_k, ~~~ \forall k.
 \end{align}Then we have $\mv{D}^{\rm dl}=\bar{\mv{D}}^{\rm ul}$ and $\mv{A}^{\rm dl}=(\bar{\mv{A}}^{\rm ul})^T$. As a result, (\ref{eqn:downlink SINR}) reduces to
\begin{align}
\left(\mv{I}-\bar{\mv{D}}^{\rm ul}(\bar{\mv{A}}^{\rm ul})^T\right)\mv{p}^{\rm dl}=\sigma^2\bar{\mv{D}}^{\rm ul}\mv{1}.
\end{align}Since $\bar{\mv{D}}^{\rm ul}\bar{\mv{A}}^{\rm ul}$ and $\bar{\mv{D}}^{\rm ul}(\bar{\mv{A}}^{\rm ul})^T$ possess the same eigenvalues, we have $\rho(\bar{\mv{D}}^{\rm ul}(\bar{\mv{A}}^{\rm ul})^T)=\rho(\bar{\mv{D}}^{\rm ul}\bar{\mv{A}}^{\rm ul})<1$. According to \cite[Theorem 2.1]{Seneta81}, this implies the existence of a feasible power solution $\mv{p}^{\rm dl}>\mv{0}$ to equation (\ref{eqn:downlink SINR}), which is given by
\begin{align}\label{eqn:downlink power}
\mv{p}^{\rm dl}=\left(\mv{I}-\bar{\mv{D}}^{\rm ul}(\bar{\mv{A}}^{\rm ul})^T\right)^{-1}\sigma^2\bar{\mv{D}}^{\rm ul}\mv{1}.
\end{align}By substituting the above downlink power solution into (\ref{eqn:downlink quantization noise}), the corresponding quantization noise power levels can be obtained.

Given any feasible uplink solution $(\{\bar{p}_i^{\rm ul},\bar{\mv{w}}_i\},\{\bar{q}_m^{\rm ul}\})\in \mathcal{T}^{\rm ul}(\{C_m\},\bar{P})$, we have constructed a downlink solution as given by (\ref{eqn:downlink beamforming vector}), (\ref{eqn:downlink power}) and (\ref{eqn:downlink quantization noise}) which satisfies constraints (\ref{eqn:downlink rate constraint}) and (\ref{eqn:downlink fronthaul constraint}) in problem (\ref{eqn:problem 2}). Next, we show that this solution also satisfies the sum-power constraint (\ref{eqn:downlink power constraint}):
\begin{align}
\hspace{-5pt}& P^{\rm dl}(\{p_i^{\rm dl}\},\{q_m^{\rm dl}\})=\sum\limits_{i=1}^Kp_i^{\rm dl}+\sum\limits_{m=1}^Mq_m^{\rm dl} \nonumber \\ = & ~ \sum\limits_{i=1}^Kp_i^{\rm dl}+\sum\limits_{m=1}^M\sum\limits_{i=1}^K\frac{p_i^{\rm dl}|v_{i,m}|^2}{\eta_m}=\mv{1}^T\mv{p}^{\rm dl}+(\bar{\mv{\tau}}^{\rm ul})^T\mv{p}^{\rm dl} \nonumber \\ = & ~ \sigma^2(\mv{1}+\bar{\mv{\tau}}^{\rm ul})^T\left(\mv{I}-\bar{\mv{D}}^{\rm ul}(\bar{\mv{A}}^{\rm ul})^T\right)^{-1}\bar{\mv{D}}^{\rm ul}\mv{1} \nonumber \\ = & ~ \sigma^2\mv{1}^T\left(\mv{I}-\bar{\mv{D}}^{\rm ul}\bar{\mv{A}}^{\rm ul}\right)^{-1}\bar{\mv{D}}^{\rm ul}(\mv{1}+\bar{\mv{\tau}}^{\rm ul}) \nonumber \\  = & ~ \mv{1}^T\bar{\mv{p}}^{\rm ul}=\sum\limits_{i=1}^K\bar{p}_i^{\rm ul}=P^{\rm ul}(\{\bar{p}_i^{\rm ul}\}).
\end{align}

As result, given any feasible uplink solution $(\{\bar{p}_i^{\rm ul},\bar{\mv{w}}_i\},\{\bar{q}_m^{\rm ul}\})\in \mathcal{T}^{\rm ul}(\{C_m\},P)$, the constructed downlink solution given in (\ref{eqn:downlink beamforming vector}), (\ref{eqn:downlink power}) and (\ref{eqn:downlink quantization noise}) satisfies all the constraints in problem (\ref{eqn:problem 2}). In other words, any rate tuple that is achievable in the uplink is also achievable in the downlink.

Similarly, we can show that given any feasible downlink solution $(\{\bar{p}_i^{\rm dl},\bar{\mv{v}}_i\},\{\bar{q}_m^{\rm dl}\})\in \mathcal{T}^{\rm dl}(\{C_m\},P)$, the following uplink problem is always feasible.
\begin{align} \hspace{-5pt} \mathop{\mathrm{find}} & ~ \{p_i^{\rm ul},\mv{w}_i\},\{q_m^{\rm ul}\}  \label{eqn:problem 1} \\
\hspace{-5pt}\mathrm {s.t.}  & ~R_k^{\rm ul}(\{p_i^{\rm ul},\mv{w}_i\},\{q_m^{\rm ul}\})=R_k^{\rm dl}(\{\bar{p}_i^{\rm dl},\bar{\mv{v}}_i\},\{\bar{q}_m^{\rm dl}\}),  \forall k, \label{eqn:uplink rate constraint} \\ \hspace{-5pt} & ~ C_m^{\rm ul}(\{p_i^{\rm ul}\},q_m^{\rm ul})=C_m^{\rm dl}(\{\bar{p}_i^{\rm dl},\bar{\mv{v}}_i\},\bar{q}_m^{\rm dl}), ~ \forall m, \label{eqn:uplink fronthaul constraint} \\ \hspace{-5pt} & ~ P^{\rm ul}(\{p_i^{\rm ul}\})=P^{\rm dl}(\{\bar{p}_i^{\rm dl}\},\{\bar{q}_m^{\rm dl}\}). \label{eqn:uplink power constraint}
\end{align}In other words, any rate tuple that is achievable in the downlink is also achievable in the uplink. Theorem \ref{theorem1} is thus proved.

\end{proof}

\begin{remark}\label{remark1}
Besides linear encoding and decoding, to achieve larger rate regions, dirty-paper coding can be used in the downlink and successive interference cancellation can be used in the uplink at the CP. Consider again the compression-based strategy C-RAN with single-antenna terminals. Using similar proof technique as for Theorem \ref{theorem1}, it can be shown that by reversing the encoding and decoding order, any rate tuple that is achievable in the uplink is achievable with the same sum-power and individual fronthaul capacity constraints in the downlink, and vice versa. As a result, uplink-downlink duality also applies with non-linear encoding and decoding techniques.
\end{remark}

\begin{remark}\label{remark2}
In \cite{Yu07}, it is shown based on the minimax optimization technique that the capacity region of the BC with per-antenna power constraints is the same as that of its dual MAC with an uncertain noise at the receiver constrained by a certain positive semidefinite covariance matrix. Following a similar approach, it can be shown that the achievable rate region of the downlink C-RAN with per-RRH power constraints is the same as that of its dual uplink C-RAN with uncertain noises across all the RRHs.
\end{remark}

\section{Application of Uplink-Downlink Duality}\label{sec:Application of Uplink-Downlink Duality}
This section illustrates an application of uplink-downlink duality. We first show that the sum-power minimization problem in the uplink C-RAN can be globally solved by the celebrated fixed-point method \cite{Yates95}. Then, the optimal solution to the downlink sum-power minimization problem is obtained based on the uplink solution. Specifically, in the uplink, the sum-power minimization problem is formulated as
\begin{align}\mathop{\mathrm{minimize}}_{\{p_i^{\rm ul},\mv{w}_i\},\{q_m^{\rm ul}\}} & ~~~ P^{\rm ul}(\{p_i^{\rm ul}\}) \label{eqn:problem P1}  \\
\mathrm {subject \ to} ~ & ~~~R_k^{\rm ul}(\{p_i^{\rm ul},\mv{w}_i\},\{q_m^{\rm ul}\})\geq R_k, ~~~ \forall k,  \nonumber \\ & ~~~C_m^{\rm ul}(\{p_i^{\rm ul}\},q_m^{\rm ul}) \leq C_m, ~~~ \forall m, \nonumber \end{align}where $R_k$ denotes the rate requirement of user $k$. It can be shown that with the optimal solution, the fronthaul capacities should be fully used, i.e., (\ref{eqn:uplink quantization noise}) holds with $\eta_m=2^{C_m}-1$, $\forall m$. Moreover, the optimal beamforming solution is the well-known minimum-mean-square-error (MMSE) based receiver, i.e.,
\begin{align}\label{eqn:mmse receiver}
\mv{w}_k=\left(\sum\limits_{j\neq k}p_j^{\rm ul}\mv{h}_j\mv{h}_j^H+\mv{Q}^{\rm ul}(\mv{p}^{\rm ul})+\sigma^2\mv{I}\right)^{-1}\mv{h}_k, ~ \forall k,
\end{align}where $\mv{p}^{\rm ul}\in \mathbb{C}^{K\times 1}$ with the $k$th element denoted by $p_k^{\rm ul}$, and $\mv{Q}^{\rm ul}(\mv{p}^{\rm ul})={\rm diag}(q_1^{\rm ul},\cdots,q_M^{\rm ul})$ is a diagonal matrix with $q_m^{\rm ul}$'s given by (\ref{eqn:uplink quantization noise}). By substituting the quantization noise levels and receive beamforming vectors with (\ref{eqn:uplink quantization noise}) and (\ref{eqn:mmse receiver}), problem (\ref{eqn:problem P1}) reduces to the following power control problem
\begin{align}\mathop{\mathrm{minimize}}_{\mv{p}^{\rm ul}} ~ & ~~~ P^{\rm ul}(\{p_i^{\rm ul}\})  \label{eqn:problem P1'} \\
\mathrm {subject \ to} & ~~~\mv{p}^{\rm ul}\geq \mv{\Gamma}(\mv{p}^{\rm ul}), \nonumber \end{align}where $\mv{\Gamma}(\mv{p}^{\rm ul})\in \mathbb{C}^{K\times 1}$ with the $k$th element as
\begin{align}\label{eqn:gamma}
& [\mv{\Gamma}(\mv{p}^{\rm ul})]_k\nonumber \\ = & \frac{2^{R_k}-1}{\mv{h}_k^H\left(\sum\limits_{j\neq k}p_j^{\rm ul}\mv{h}_j\mv{h}_j^H+\mv{Q}^{\rm ul}(\mv{p}^{\rm ul})+\sigma^2\mv{I}\right)^{-1}\mv{h}_k}, ~ \forall k.
\end{align}It can be shown that $\mv{\Gamma}(\mv{p}^{\rm ul})$ is a standard interference function \cite{Yates95}. According to \cite[Theorem 2]{Yates95}, the following fixed-point algorithm can converge to the globally optimal power solution to problem (\ref{eqn:problem P1'}) given any initial point:
\begin{align}
\mv{p}^{{\rm ul},(n+1)}=\mv{\Gamma}(\mv{p}^{{\rm ul},(n)}),
\end{align}where $\mv{p}^{{\rm ul},(n)}$ denotes the power allocation solution obtained in the $n$th iteration of the above method. After the optimal power allocation is obtained, the optimal quantization noise levels and receive beamforming vectors to problem (\ref{eqn:problem P1}) can be obtained according to (\ref{eqn:uplink quantization noise}) and (\ref{eqn:mmse receiver}), respectively.

Next, consider the sum-power minimization problem in the downlink.
\begin{align}\mathop{\mathrm{minimize}}_{\{p_i^{\rm dl},\mv{v}_i\},\{q_m^{\rm dl}\}} & ~~~ P^{\rm dl}(\{p_i^{\rm dl}\},\{q_m^{\rm dl}\}) \label{eqn:problem P2}  \\
\mathrm {subject \ to} ~ & ~~~R_k^{\rm dl}(\{p_i^{\rm dl},\mv{v}_i\},\{q_m^{\rm dl}\})\geq R_k, ~~~ \forall k,  \nonumber \\ & ~~~C_m^{\rm dl}(\{p_i^{\rm dl},\mv{v}_i\},q_m^{\rm dl}) \leq C_m, ~~~ \forall m. \nonumber \end{align}In general, the downlink sum-power minimization problem is more involved since unlike the uplink, the optimal transmit beamforming vectors are not easy to obtain. However, Theorem \ref{theorem1} indicates that the optimal values of problems (\ref{eqn:problem P1}) and (\ref{eqn:problem P2}) are identical. As a result, we can find the optimal solution to problem (\ref{eqn:problem P1}) first, based on which we can then construct the optimal downlink solution according to (\ref{eqn:downlink beamforming vector}), (\ref{eqn:downlink power}), and (\ref{eqn:downlink quantization noise}).

\begin{remark}\label{remark3}
Similar to the power minimization problem, uplink-downlink duality can be applied to solve the downlink weighted sum-rate maximization problem with per-RRH power constraints based on the dual uplink, which is typically easier to deal with (see Remark \ref{remark2}).
\end{remark}

\section{Conclusion}\label{sec:Conclusion}
An uplink-downlink duality relationship is established for C-RAN. Specifically, we show that if transmission and compression designs are jointly optimized, the achievable rate regions of the uplink and downlink C-RAN are identical under the same sum-power and individual fronthaul capacity constraints when independent compression is performed across RRHs. Furthermore, this duality result proves useful for solving the downlink power minimization problem based on its dual problem in the uplink.

\end{document}